\newtheorem{prop}{Proposition}
\newcolumntype{C}{>{$}c<{$}} 
\begin{document}

\title{Supplementary\\
Efficient Quantum Convolutional Neural Networks for Image Classification: Overcoming Hardware Constraints}

\author{Peter Röseler}
\email{p.roeseler@fz-juelich.de}
\affiliation{Department of Computer Science, University of Bonn, 53111 Bonn, Germany}
\affiliation{Bayer AG, Kaiser-Wilhelm-Allee 1
51373 Leverkusen, Germany}
\affiliation{Jülich Supercomputing Centre, Institute for Advanced Simulation, Forschungszentrum Jülich, 52425 Jülich, Germany}

\author{Oliver Schaudt}
\affiliation{Bayer AG, Kaiser-Wilhelm-Allee 1
51373 Leverkusen, Germany}

\author{Helmut Berg}
\affiliation{Bayer AG, Kaiser-Wilhelm-Allee 1
51373 Leverkusen, Germany}

\author{Christian Bauckhage}
\affiliation{Department of Computer Science, University of Bonn, 53111 Bonn, Germany}

\author{Matthias Koch}
\affiliation{Bayer AG, Kaiser-Wilhelm-Allee 1
51373 Leverkusen, Germany}


\maketitle

\section{Classical fragment encoding}    

    As a side note, the fragment encoding procedure can also yield an exponential decrease in memory usage for classical CNNs. Once the segment of a receptive field in a given layer is processed, the memory used for that computation can be released. In common scenarios, this can lead to an exponential decrease in memory usage, given continuous access to an external storage system that provides the input data on demand Proposition \ref{prp:multi-depth_kernel_processing}. 
    \begin{prop}
        \label{prp:multi-depth_kernel_processing}
        Let $n\times n$ be the size of the input with $n\in\mathbb{N}$. For a CNN with altering layers of convolution and pooling that have fixed kernel sizes of $k\times k$ with stride 1 and $m\times m$ with stride $m$ respectively, where $m,k\in\mathbb{N}$, given that $\exists c\in\mathbb{N}: m^c=n:$
        \begin{equation}
            1+\sum_{d=1}^{log_m(n)}(k^2+m^2-2)=1+(k^2+m^2-2)log_m(n)
        \end{equation}
        values needs to be stored at most.
    \end{prop}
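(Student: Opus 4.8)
The plan is to model the fragment-encoded evaluation as a \emph{depth-first} traversal of the receptive-field dependency tree of a single output value and to track the peak number of values that must be kept live at once. First I would use the hypothesis $n=m^c$ to pin down the architecture: since each pooling layer has stride $m$ and kernel $m\times m$, it divides the spatial extent by $m$, so after exactly $c=\log_m(n)$ convolution--pooling pairs the $n\times n$ input is collapsed to a single output value. The dependency tree of that output therefore alternates between pooling nodes of fan-out $m^2$ and convolution nodes of fan-out $k^2$, has depth $2\log_m(n)$, and terminates in the input pixels as its leaves, which by hypothesis are supplied on demand by the external store and so cost one live value each when read.

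Next I would introduce $P_{\mathrm{pool}}(d)$ and $P_{\mathrm{conv}}(d)$, the peak number of stored values while evaluating one pooling, respectively one convolution, output at layer $d$ under depth-first processing. The point of fragment encoding is that a node's children are evaluated one at a time, each finished child is reduced to the single scalar it contributes and buffered, and the memory of its subtree is released; hence the peak occupancy of a node is attained while its \emph{last} child is being evaluated, at which moment the results of the other siblings sit in the buffer. This gives the recurrences
\begin{align}
P_{\mathrm{pool}}(d) &= (m^2-1) + P_{\mathrm{conv}}(d), \\
P_{\mathrm{conv}}(d) &= (k^2-1) + P_{\mathrm{pool}}(d-1),
\end{align}
with base case $P_{\mathrm{pool}}(0)=1$, corresponding to reading one input pixel. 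Unrolling from $d=\log_m(n)$ down to the leaves telescopes the per-layer increment $(m^2-1)+(k^2-1)=k^2+m^2-2$ over the $\log_m(n)$ layers and leaves the single base value, yielding $P_{\mathrm{pool}}(\log_m(n)) = 1+(k^2+m^2-2)\log_m(n)$. Equivalently, I can argue directly from the snapshot at maximal occupancy: along the active path each layer $1\le d\le\log_m(n)$ simultaneously holds $m^2-1$ pending pooling inputs and $k^2-1$ pending convolution inputs while one leaf value is being read, which counts to the same total. Since the network has a single output that is processed with memory released afterwards, this per-output peak is the global maximum, establishing the claimed bound.

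The main obstacle is to turn the peak-memory recurrence from a heuristic into a rigorous upper bound: I must show that the depth-first schedule \emph{never} stores more than the stated amount, so that the ``at most'' is genuine, by an induction on $d$ that the maximum of live values over the whole traversal equals buffered-siblings plus the peak of the currently active child. A related subtlety I would address is that stride-$1$ convolution makes neighbouring receptive fields overlap, so the schedule recomputes shared lower-layer values rather than caching them. Because the proposition bounds \emph{stored} values and not arithmetic operations, this recomputation is admissible, and it is precisely what keeps the cost additive at $k^2+m^2-2$ per layer instead of forcing the full $(m+k-1)^2$ input block of each pooling window to be resident at once. Finally I would note that feature-map (channel) multiplicities lie outside the stated scope and would only rescale the per-layer buffers by a constant factor, leaving the exponential-in-depth structure of the bound intact.
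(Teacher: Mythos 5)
Your proof is correct and follows essentially the same route as the paper: the paper's inductive layer-by-layer count ($m^2+k^2-1$ for the first pooling output, then $k^2+m^2-2$ additional buffered values per layer) is exactly your sibling-buffer recurrence $P_{\mathrm{pool}}(d)=(m^2-1)+(k^2-1)+P_{\mathrm{pool}}(d-1)$ with base case $1$, telescoped to $1+(k^2+m^2-2)\log_m(n)$. Your explicit remarks on recomputing overlapping stride-$1$ receptive fields and on why the depth-first peak is the global maximum make rigorous what the paper leaves implicit, but they refine rather than change the argument.
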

    \begin{proof}
            \begin{enumerate}
                \item Consider the first layer. A single pooling kernel requires $m^2$ many inputs
                \\
                $\implies m^2$ convolutions of size $k^2$ are needed.  After computing a convolution, the memory of the input can be freed.
                \\
                $\implies$ At most $m^2 + k^2 - 1$ values need to be stored to compute a single output of the next pooling layer.
                \item Consider the next convolutional layer. Using $m^2+k^2-1$ values, a single input for the current layer can be generated as described in (i).
                \\
                $\implies$ A convolution requires $k^2$ inputs, so at most $k^2 - 1 + m^2 + k^2 - 1$ values are needed.
                \\
                $\implies$ At most $m^2 - 1 + k^2 - 1 + m^2 + k^2 - 1$ values are required for an output of the second pooling layer.
                \item By using the same arguments as in (ii) and considering the number of values required for the first layer in (i), an output in the $j$-th pooling layer can be generated with:
                \begin{equation}
                    k^2+m^2-1+\sum_{d=1}^{j-1}(k^2+m^2-2)
                \end{equation}
                values. The pooling operation is the only step that reduces the image size, which decreases by a factor of $m$ each time. This leads to a size of $\frac{n}{m^j}\times \frac{n}{m^j}$ in the $j$-th layer. Given that $\exists c\in\mathbb{N}$ such that $m^c = n$, the total number of values that need to be stored until the CNN reaches a size of 1 is:
                \begin{equation}
                    1+\sum_{d=1}^{log_m(n)}(k^2+m^2-2)=1+(k^2+m^2-2)\log_m(n).
                \end{equation}
            \end{enumerate}
    \end{proof}

\clearpage

\section{CNN baselines}   
    \begin{table}[ht]  
    \centering
    \renewcommand{\arraystretch}{1.5} 
    \setlength{\tabcolsep}{12pt} 
    \begin{tabular}{c c c} 
    \hline
    Layer & Filter Size & \makecell{Activation\\Function} \\ 
    \hline
    Conv2D-1 & $3\times3\times1$ & ELU \\ 
    MaxP2D-1 & $2\times2$ & - \\ 
    Conv2D-2 & $2\times2\times1$ & ELU \\ 
    MaxP2D-2 & $2\times2$ & - \\ 
    Conv2D-3 & $3\times3\times1$ & ELU \\ 
    MaxP2D-3 & $2\times2$ & - \\ 
    Conv2D-4 & $1\times1\times1$ & ELU \\ 
    MaxP2D-4 & $2\times2$ & Sigmoid \\ 
    \hline
    \end{tabular}
    \caption{Optimized CNN for classes 7 and 8 with $27$ parameters.}
    \label{tab:network_architecture_78}
    \end{table}

    \begin{table}[ht]
    \centering
    \renewcommand{\arraystretch}{1.5} 
    \setlength{\tabcolsep}{12pt} 
    \begin{tabular}{c c c} 
    \hline
    Layer & Filter Size & \makecell{Activation\\Function} \\ 
    \hline
    Conv2D-1 & $3\times3\times3$ & ELU \\ 
    MaxP2D-1 & $2\times2$ & - \\
    Conv2D-2 & $2\times2\times1$ & ELU \\
    Conv2D-3 & $3\times3\times4$ & ELU \\ 
    MaxP2D-2 & $2\times2$ & - \\ 
    Conv2D-4 & $3\times3\times4$ & ELU \\
    Conv2D-5 & $3\times3\times1$ & Sigmoid \\
    \hline
    \end{tabular}
    \caption{Optimized CNN for classes 0, 1, 2, and 3 with $263$ parameters.}
    \label{tab:network_architecture_0123}
    \end{table}

    \begin{table}[ht]
    \centering
    \renewcommand{\arraystretch}{1.5} 
    \setlength{\tabcolsep}{12pt} 
    \begin{tabular}{c c c} 
    \hline
    Layer & Filter Size & \makecell{Activation\\Function} \\ 
    \hline
    Conv2D-1 & $3\times3\times5$ & ELU \\ 
    MaxP2D-1 & $2\times2$ & - \\ 
    Conv2D-2 & $1\times1\times1$ & ELU \\ 
    Conv2D-3 & $2\times2\times4$ & ELU \\ 
    MaxP2D-2 & $2\times2$ & - \\  
    Conv2D-4 & $3\times3\times5^*$ & ELU \\ 
    MaxP2D-3 & $2\times2$ & - \\ 
    Conv2D-5 & $3\times3\times1$ & Sigmoid  \\
    \hline
    \end{tabular}
    \caption{Optimized CNN for classes greater than 4 with $307$ parameters. Layers marked with $^*$ indicate padding set to 1.}
    \label{tab:network_architecture_g4}
    \end{table}

\clearpage

\section{Baseline circuits}
    \label{app:baseline_circuits}
    This section presents the quantum circuits used in the ansatz search experiments, adapted from previous research as baselines. These circuits were adjusted for 3- and 4-qubit configurations to match the experimental requirements in this work. Figures \ref{fig:convolutional_circuits_3q} and \ref{fig:quantum_circuits_4q} depict the circuits for 3 and 4 qubits, respectively. Circuits for larger systems up to 9 qubits can be extrapolated accordingly, following the same structural patterns.
     \begin{figure}[ht]
        \centering
            \begin{subfigure}{0.32\textwidth}
                \centering
                
                \begin{quantikz}[rounded corners, column sep=5pt, row sep=4pt]
                & \gate[][0em][1.75em]{H} & \qw & \ctrl{1} &  \gate{R_x(\theta_0)} & \qw \\
                & \gate[][0em][1.75em]{H} & \ctrl{1} & \ctrl{-1} & \gate{R_x(\theta_1)} & \qw \\
                & \gate[][0em][1.75em]{H} & \ctrl{-1} & \qw & \gate{R_x(\theta_2)} & \qw
                \end{quantikz}
                \caption{Circuit 1}
            \end{subfigure}
            \begin{subfigure}{0.32\textwidth}
                \centering
                
                \begin{quantikz}[rounded corners, column sep=5pt, row sep=4pt]
                & \gate{R_y(\theta_0)} & \ctrl{1} & \qw & \qw & \qw \\
                & \gate{R_y(\theta_1)} & \targ{} & \ctrl{1} & \qw & \qw \\
                & \gate{R_y(\theta_2)} & \qw & \targ{} & \gate{R_y(\theta_3)} & \qw
                \end{quantikz}
                \caption{Circuit 2}
            \end{subfigure}
            \vspace{0.5cm}
            \begin{subfigure}{0.32\textwidth}
                \centering
                
                \begin{quantikz}[rounded corners, column sep=5pt, row sep=4pt]
                & \gate{R_x(\theta_0)} & \gate{R_z(\theta_3)} & \qw & \targ{} & \qw \\
                & \gate{R_x(\theta_1)} & \gate{R_z(\theta_4)} & \targ{} & \ctrl{-1} & \qw\\
                & \gate{R_x(\theta_2)} & \gate{R_z(\theta_5)} & \ctrl{-1} & \qw & \qw
                \end{quantikz}
                \caption{Circuit 3}
            \end{subfigure}
            \vspace{0.5cm}
            \begin{subfigure}{0.48\textwidth}
                \centering
                
                \begin{quantikz}[rounded corners, column sep=5pt, row sep=4pt]
                & \gate{R_y(\theta_0)} & \qw & \ctrl{1} & \ctrl{2} & \gate{R_y(\theta_3)} & \qw \\
                & \gate{R_y(\theta_1)} & \ctrl{1} & \ctrl{-1} & \qw & \gate{R_y(\theta_4)} & \qw \\
                & \gate{R_y(\theta_2)} & \ctrl{-1} & \qw & \ctrl{-2} & \gate{R_y(\theta_5)} & \qw
                \end{quantikz}
                \caption{Circuit 4}
            \end{subfigure}
            \hspace{-0.2cm} 
            \begin{subfigure}{0.48\textwidth}
                \centering
                
                \begin{quantikz}[rounded corners, column sep=5pt, row sep=4pt]
                & \gate{R_y(\theta_0)} & \targ{} & \qw & \ctrl{1} & \gate{R_y(\theta_3)} & \qw & \ctrl{2} & \targ{} & \qw \\
                & \gate{R_y(\theta_1)} & \qw & \ctrl{1} & \targ{} & \gate{R_y(\theta_4)} & \targ{} & \qw & \ctrl{-1} & \qw \\
                & \gate{R_y(\theta_2)} & \ctrl{-2} & \targ{} & \qw & \gate{R_y(\theta_5)} & \ctrl{-1} & \targ{} & \qw & \qw
                \end{quantikz}
                \caption{Circuit 5}
            \end{subfigure}
            \vspace{0.5cm}
            \begin{subfigure}{0.98\textwidth}
                \centering
                
                \begin{quantikz}[rounded corners, column sep=5pt, row sep=4pt]
                & \gate{R_y(\theta_0)} & \gate{R_z(\theta_3)} & \qw & \ctrl{1} & \gate{R_y(\theta_6)} & \qw & \ctrl{2} & \gate{R_z(\theta_{11})} & \qw \\
                & \gate{R_y(\theta_1)} & \qw & \ctrl{1} & \gate{R_z(\theta_5)} & \gate{R_y(\theta_7)} & \gate{R_z(\theta_{9})} & \qw & \ctrl{-1} & \qw \\
                & \gate{R_y(\theta_2)} & \ctrl{-2} & \gate{R_z(\theta_4)} & \qw & \gate{R_y(\theta_8)} & \ctrl{-1} & \gate{R_z(\theta_{10})} & \qw & \qw
                \end{quantikz}
                \caption{Circuit 6}
            \end{subfigure}
        
        \caption{Parameterized quantum circuits for a 3-qubit convolution layer. The circuits are adapted from previous research papers.}
        \label{fig:convolutional_circuits_3q}
        \end{figure}

        \begin{figure}[ht]
        \centering
        \begin{subfigure}{0.48\textwidth}
            \centering
            
            \begin{quantikz}[rounded corners, column sep=5pt, row sep=4pt]
            & \gate[][0em][1.75em]{H} & \qw & \qw & \ctrl{1} & \qw & \qw & \gate{R_x(\theta_0)} & \qw \\
            & \gate[][0em][1.75em]{H} & \qw & \ctrl{1} & \ctrl{-1} & \qw & \qw & \gate{R_x(\theta_1)} & \qw \\
            & \gate[][0em][1.75em]{H} & \ctrl{1} & \ctrl{-1} & \qw & \qw & \qw & \gate{R_x(\theta_2)} & \qw \\
            & \gate[][0em][1.75em]{H} & \ctrl{-1} & \qw & \qw & \qw & \qw & \gate{R_x(\theta_3)} & \qw
            \end{quantikz}
            \caption{Circuit 1}
        \end{subfigure}
        \begin{subfigure}{0.48\textwidth}
            \centering
            
            \begin{quantikz}[rounded corners, column sep=5pt, row sep=4pt]
            & \gate{R_y(\theta_0)} & \ctrl{1} & \qw  & \qw & \qw & \qw\\
            & \gate{R_y(\theta_1)} & \targ{} & \gate{R_y(\theta_4)} & \ctrl{2} & \qw & \qw\\
            & \gate{R_y(\theta_2)} & \ctrl{1} & \qw & \qw & \qw & \qw\\
            & \gate{R_y(\theta_3)} & \targ{} & \gate{R_y(\theta_5)} & \targ{} & \gate{R_y(\theta_6)} & \qw
            \end{quantikz}
            \caption{Circuit 2}
        \end{subfigure}
        \vspace{0.5cm}
        \begin{subfigure}{0.48\textwidth}
            \centering
            
            \begin{quantikz}[rounded corners, column sep=5pt, row sep=4pt]
            & \gate{R_x(\theta_0)} & \gate{R_z(\theta_4)} & \qw & \qw & \targ{} & \qw \\
            & \gate{R_x(\theta_1)} & \gate{R_z(\theta_5)} & \qw & \targ{} & \ctrl{-1} & \qw \\
            & \gate{R_x(\theta_2)} & \gate{R_z(\theta_6)} & \targ{} & \ctrl{-1} & \qw & \qw \\
            & \gate{R_x(\theta_3)} & \gate{R_z(\theta_7)} & \ctrl{-1} & \qw & \qw & \qw
            \end{quantikz}
            \caption{Circuit 3}
        \end{subfigure}
        \begin{subfigure}{0.48\textwidth}
            \centering
            
            \begin{quantikz}[rounded corners, column sep=5pt, row sep=4pt]
            & \gate{R_y(\theta_0)} & \qw & \qw & \ctrl{1} & \ctrl{3} & \gate{R_y(\theta_4)} & \qw \\
            & \gate{R_y(\theta_1)} & \qw & \ctrl{1} & \ctrl{-1} & \qw &\gate{R_y(\theta_5)} & \qw \\
            & \gate{R_y(\theta_2)} & \ctrl{1} & \ctrl{-1} & \qw & \qw & \gate{R_y(\theta_6)} & \qw \\
            & \gate{R_y(\theta_3)} & \ctrl{-1} & \qw & \qw & \ctrl{-3} & \gate{R_y(\theta_7)} & \qw
            \end{quantikz}
            \caption{Circuit 4}
        \end{subfigure}
        \vspace{0.5cm}
        \begin{subfigure}{0.48\textwidth}
            \centering
            
            \begin{quantikz}[rounded corners, column sep=5pt, row sep=4pt]
            & \gate{R_y(\theta_0)} & \targ{} & \qw & \qw & \ctrl{1} & \gate{R_y(\theta_4)} & \qw & \ctrl{3} & \targ{} & \qw & \qw & \qw\\
            & \gate{R_y(\theta_1)} & \qw & \qw & \ctrl{1} & \targ{} & \gate{R_y(\theta_5)} & \qw & \qw & \ctrl{-1} & \qw & \targ{} & \qw \\
            & \gate{R_y(\theta_2)} & \qw & \ctrl{1} & \targ{} & \qw & \gate{R_y(\theta_6)} & \targ{} & \qw & \qw & \qw & \ctrl{-1} & \qw \\
            & \gate{R_y(\theta_3)} & \ctrl{-3} & \targ{} & \qw & \qw & \gate{R_y(\theta_7)} & \ctrl{-1} & \targ{} & \qw & \qw & \qw & \qw
            \end{quantikz}
            \caption{Circuit 5}
        \end{subfigure}
    \caption{Parameterized quantum circuits for a 4-qubit convolution layer. The circuits are adapted from previous research papers.}
    \label{fig:quantum_circuits_4q}
    \end{figure}

\clearpage

\section{Ansatz search}
The ansatz search yielded circuits with the lowest objective function values, indicating best overall performance for both architectures. For hybrid QCNN, the discovered circuits achieved better balance between expressibility and entanglement while being more compact than previous designs. Regular QCNN circuits showed stronger emphasis on entanglement, likely due to larger expressibility normalization. Both architectures relied on parameter sharing, which appears crucial for circuit efficiency.

While the search methodology appears adaptable across quantum architectures, circuit performance was architecture-dependent, similar to the 'no free lunch' theorem in classical ML. For instance, circuits 1 and 3 showed substantial changes in expressibility when transferred between architectures. Additionally, the intialization via random statevector or parameters for the encoding circuits can significantly impact the entanglement values, as seen in circuit 1. Beyond architectural dependencies, another aspect to consider for future research is that the exponential growth of the state space with qubit count presents challenges for scaling.

        \begin{table}[ht]
        \centering
        \renewcommand{\arraystretch}{2.5}
        \begin{adjustbox}{max width=\textwidth}
        \begin{tabular}{|c|c|c|c|c|c|}
        \hline
        \textbf{Name} & \textbf{Circuit} & \textbf{Name} & \textbf{Circuit} & \textbf{Name} & \textbf{Circuit} \\ \hline
        
        $H$ & 
        \begin{quantikz} \qw & \gate{H} & \qw \end{quantikz} & 
        $SX$ & 
        \begin{quantikz} \qw & \gate{\surd X} & \qw \end{quantikz} & 
        $X (NOT)$ & 
        \begin{quantikz} \qw & \gate{X} & \qw \end{quantikz} \\ \hline
        $Y$ & 
        \begin{quantikz} \qw & \gate{Y} & \qw \end{quantikz} & 
        $Z$ & 
        \begin{quantikz} \qw & \gate{Z} & \qw \end{quantikz} & 
        $R_x(\theta)$ & 
        \begin{quantikz} \qw & \gate{R_x(\theta)} & \qw \end{quantikz} \\ \hline
        $R_y(\theta)$ & 
        \begin{quantikz} \qw & \gate{R_y(\theta)} & \qw \end{quantikz} & 
        $R_z(\theta)$ & 
        \begin{quantikz} \qw & \gate{R_z(\theta)} & \qw \end{quantikz} & 
        $U3(\theta,\phi,\lambda)$ & 
        \begin{quantikz} \qw & \gate{U3(\theta,\phi,\lambda)} & \qw \end{quantikz} \\ \hline
        $CX (CNOT)$ & 
        \begin{quantikz} \qw & \ctrl{1} & \qw  \\ \qw & \targ & \qw & \qw \end{quantikz} & 
        $CY$ & 
        \begin{quantikz} \qw & \ctrl{1} & \qw \\ \qw & \gate{Y}  & \qw  \end{quantikz} & 
        $CZ$ & 
        \begin{quantikz} \qw & \ctrl{1} & \qw \\ \qw & \ctrl{-1} & \qw \end{quantikz} \\ \hline
        $ECR$ & 
        \begin{quantikz} \qw & \gate[2]{ECR} \gateinput{0} & \qw \\ \qw & \gateinput{1} & \qw \end{quantikz} & 
        $CRX(\theta)$ & 
        \begin{quantikz} \qw & \ctrl{1} & \qw \\ \qw & \gate{R_x(\theta)} & \qw \end{quantikz} & 
        $CRY(\theta)$ & 
        \begin{quantikz} \qw & \ctrl{1} & \qw \\ \qw & \gate{R_y(\theta)} & \qw \end{quantikz} \\ \hline
        $CRZ(\theta)$ & 
        \begin{quantikz} \qw & \ctrl{1} & \qw \\ \qw & \gate{R_z(\theta)} & \qw \end{quantikz} & & & & \\ \hline
        \end{tabular}
        \end{adjustbox}
        \caption{Quantum gates used for the ansatz search.}
        \label{tab:quantum_circuits_search}
        \end{table}

\clearpage

\begin{figure}[ht]
    \centering
    \begin{quantikz}[rounded corners, column sep=5pt, row sep=4pt]
    & \gate{Z} & \ctrl{1} & \gate{R_x(\theta_0)} & \qw & \qw \\
    & \gate{R_y(\theta_0)} & \ctrl{-1} & \ctrl{-1} & \gate{R_y(\theta_1)} & \qw
    \end{quantikz}
    \caption{2-qubit circuit from the ansatz search of the hybrid QCNN.}
    \label{fig:mqcnn_pqc_search_circ_2q}
\end{figure}

\begin{table}[ht]   
    \centering
    \resizebox{\textwidth}{!}{  
    \begin{tabular}{c C C C C C C}
        \toprule
        Circuit ID & \text{Parameters} & \text{Depth} & \text{Gates} & \text{Expressibility} & \text{Entanglement} & \mathcal{L}_{PQC} \\
        \midrule
        1 & \mathbf{2} & \mathbf{3} & 5 & 0.252 \pm 0.31  & 0.203 \pm 0.258 & 1.664 \\
        2 & 3 & \mathbf{3} & \mathbf{4} & 0.005 \pm 0.001 & 0.249 \pm 0.003 & 1.376 \\
        3 & 4 & \mathbf{3} & 5 & 0.331 \pm 0.325 & \mathbf{0.501 \pm 0.14} & 1.23 \\
        4 & 4 & \mathbf{3} & 5 & 0.005 \pm 0.001 & 0.25 \pm 0.002 & 1.375 \\
        5 & 4 & 4 & 6 & \mathbf{0.002 \pm 0.001} & 0.313 \pm 0.002 & 1.218 \\
        6 & 6 & 4 & 6  & 0.005 \pm 0.001 & 0.214 \pm 0.002 & 1.466 \\
        AS & \mathbf{2} & 4 & 5 & 0.013 \pm 0.012 & 0.389 \pm 0.09 & \mathbf{1.028} \\
        \midrule
        \midrule
        \makecell{ Thresholds } & \leq2 & \leq6 & \leq10 & \leq0.016 & \geq0.4 & \leq1.0 \\
        \bottomrule
    \end{tabular}
    }
    \caption{Evaluation of PQCs for the hybrid QCNN architecture with 2 qubits. The table details each circuit's complexity, expressibility and entanglement (mean $\pm$ standard deviation), and objective function value ($\mathcal{L}_{PQC}$, mean). Bold numbers indicate the best entries in each column. The bottom row (Thresholds) shows the constraints and objectives of the ansatz search (AS).}
    \label{tab:mqcnn_pqc_2q}
\end{table}

\clearpage

\begin{figure}[ht]
    \centering
    \begin{quantikz}[rounded corners, column sep=5pt, row sep=4pt]
    & \qw & \ctrl{2} & \gate{Y} & \qw & \qw &  \\
    & \gate{\surd X} & \qw & \ctrl{-1} & \qw  & \qw\\
    & \qw & \targ{} & \gate{R_y(\theta_0)} & \gate{U3(\theta_1, \theta_0, \theta_2)} & \qw
    \end{quantikz}
    \caption{3-qubit circuit from the ansatz search of the hybrid QCNN.}
    \label{fig:mqcnn_pqc_search_circ_3q}
\end{figure}

\begin{table}[ht]
    \centering
    \resizebox{\textwidth}{!}{  
    \begin{tabular}{c C C C C C C}
        \toprule
        Circuit ID & \text{Parameters} & \text{Depth} & \text{Gates} & \text{Expressibility} & \text{Entanglement} & \mathcal{L}_{PQC} \\
        \midrule
        1 & \mathbf{3} & 4 & 8 & 0.245 \pm 0.304  &  0.324 \pm 0.252 & 1.681 \\
        2 & 4 & 4 & 6 & 0.002 \pm 0.001  &  0.375 \pm 0.003 & 1.438 \\
        3 & 6 & 4 & 8 & 0.241 \pm 0.305  &  0.606 \pm 0.08AS & 1.255 \\
        4 & 6 & 5 & 9 & 0.002 \pm 0.0  &  0.375 \pm 0.002 & 1.438 \\
        5 & 6 & 8 & 12 & 0.253 \pm 0.006  &  0.626 \pm 0.002 & 1.234 \\
        6 & 12 & 8 & 12 & \mathbf{0.0 \pm 0.0}  &  0.397 \pm 0.002 & 1.405 \\
        AS & \mathbf{3} & \mathbf{3} & \mathbf{5} & 0.11 \pm 0.223  &  \mathbf{0.758 \pm 0.1} & \mathbf{1.069} \\
        \midrule
        \midrule
        \makecell{ Thresholds } & \leq3 & \leq9 & \leq15 & \leq0.016 & \geq0.667 & \leq1.0 \\
        \bottomrule
    \end{tabular}
    }
    \caption{Evaluation of PQCs for the hybrid QCNN architecture with 3 qubits. The table details each circuit's complexity, expressibility and entanglement (mean $\pm$ standard deviation), and objective function value ($\mathcal{L}_{PQC}$, mean). Bold numbers indicate the best entries in each column.  The bottom row (Thresholds) shows the constraints and objectives of the ansatz search (AS).}
    \label{tab:mqcnn_pqc_3q}
\end{table}

\clearpage

\begin{figure}[ht]
    \centering
    \begin{quantikz}[rounded corners, column sep=5pt, row sep=4pt]
    & \ctrl{1} & \gate{\surd X} & \qw & \ctrl{2} & \qw & \qw \\
    & \gate{R_y(\theta_0)} & \ctrl{1} & \gate{Y} & \qw & \qw & \qw \\
    & \qw & \gate{R_y(\theta_0)} & \qw & \gate{Y} & \qw & \qw\\
    & \gate{U3(\theta_0, \theta_0, \theta_0)} & \gate{R_y(\theta_1)} & \ctrl{-2} & \qw & \qw & \qw
    \end{quantikz}
    \caption{4-qubit circuit from the ansatz search of the hybrid QCNN.}
    \label{fig:mqcnn_pqc_search_circ_4q}
\end{figure}

\begin{table}[ht]
    \centering
    \resizebox{\textwidth}{!}{  
    \begin{tabular}{c C C C C C C}
        \toprule
        Circuit ID & \text{Parameters} & \text{Depth} & \text{Gates} & \text{Expressibility} & \text{Entanglement} & \mathcal{L}_{PQC} \\
        \midrule
        1 & 4 & 5 & 11 & 0.252 \pm 0.306  &  0.319 \pm 0.219 & 1.784 \\
        2 & 7 & 5 & 10 & 0.021 \pm 0.002  &  0.367 \pm 0.002 & 1.558 \\
        3 & 8 & 5 & 11 & 0.249 \pm 0.307  &  0.677 \pm 0.062 & 1.348 \\
        4 & 8 & 6 & 12 & \mathbf{0.002 \pm 0.001}  &  0.375 \pm 0.002 & 1.545 \\
        5 & 8 & 9 & 16 & 0.348 \pm 0.005  &  0.711 \pm 0.001 & 1.379 \\
        AS & \mathbf{2} & \mathbf{3} & \mathbf{7} & 0.006 \pm 0.005  &  \mathbf{0.859 \pm 0.017} & \mathbf{0.37} \\
        \midrule
        \midrule
        \makecell{ Thresholds } & \leq4 & \leq12 & \leq20 & \leq0.016 & \geq0.824 & \leq1.0 \\
        \bottomrule
    \end{tabular}
    }
    \caption{Evaluation of PQCs for the hybrid QCNN architecture with 4 qubits. The table details each circuit's complexity, expressibility and entanglement (mean $\pm$ standard deviation), and objective function value ($\mathcal{L}_{PQC}$, mean). Bold numbers indicate the best entries in each column.  The bottom row (Thresholds) shows the constraints and objectives of the ansatz search (AS).}
    \label{tab:mqcnn_pqc_4q}
\end{table}

\clearpage

\begin{figure}[ht]
    \centering
    \resizebox{.97\textwidth}{!}{
    \begin{quantikz}[rounded corners, column sep=5pt, row sep=4pt]
    & \gate{U3(\theta_3, \theta_4, \theta_5)} & \qw & \qw & \qw & \qw & \qw & \qw & \gate{X} & \qw & \ctrl{7} & \gate[4]{ECR} \gateinput{0} & \qw \\
    & \qw & \qw & \gate{Y} & \ctrl{3} & \qw & \qw & \qw & \qw & \qw & \qw & \qw & \qw \\
    & \qw & \qw & \qw & \qw & \gate[3]{ECR} \gateinput{1} & \ctrl{4} & \gate[3]{ECR} \gateinput{1} & \qw & \qw & \qw & \qw & \qw \\
    & \gate[3]{ECR} \gateinput{0} & \qw & \qw & \qw & \qw & \qw & \qw & \qw & \qw & \qw & \gateinput{1} & \qw \\
    & \qw & \gate{X} & \qw & \gate{Y} & \gateinput{0} & \qw & \gateinput{0} & \qw & \ctrl{2} & \qw & \gate{R_z(\theta_6)} & \qw \\
    & \gateinput{1} & \qw & \ctrl{-4} & \gate{R_z(\theta_6)} & \gate{U3(\theta_7, \theta_8, \theta_5)} & \qw & \qw & \qw & \qw & \qw & \qw & \qw \\
    & \qw & \qw & \qw & \qw & \qw & \gate{Y} & \ctrl{2} & \ctrl{-6} & \gate{R_x(\theta_5)} & \qw & \qw & \qw \\
    & \gate{Z} & \qw & \qw & \qw & \qw & \qw & \qw & \qw & \qw & \gate{Y} & \qw & \qw \\
    & \gate{X} & \gate{U3(\theta_0, \theta_1, \theta_2)} & \gate{R_x(\theta_4)} & \qw & \qw & \qw & \ctrl{-2} & \qw & \qw & \qw & \qw & 
    \end{quantikz}
    }
    \caption{9-qubit circuit from the ansatz search of the hybrid QCNN.}
    \label{fig:mqcnn_pqc_search_circ_9q}
\end{figure}

\begin{table}[ht]
    \centering
    \resizebox{\textwidth}{!}{  
    \begin{tabular}{c C C C C C C}
        \toprule
        Circuit ID & \text{Parameters} & \text{Depth} & \text{Gates} & \text{Expressibility} & \text{Entanglement} & \mathcal{L}_{PQC} \\
        \midrule
        1 & \mathbf{9} & 10 & 26 & 0.25 \pm 0.306  &  0.371 \pm 0.153 & 1.798 \\
        2 & 16 & \mathbf{8} & 24 & 0.204 \pm 0.005  &  0.435 \pm 0.001 & 1.7 \\
        3 & 18 & 10 & 26 & 0.232 \pm 0.299  &  0.776 \pm 0.031 & 1.376 \\
        4 & 18 & 11 & 27 & \mathbf{0.002 \pm 0.001}  &  0.375 \pm 0.001 & 1.623 \\
        5 & 18 & 13 & 36 & 0.421 \pm 0.004  &  0.763 \pm 0.001 & 1.528 \\
        AS & \mathbf{9} & 9 & \mathbf{20} & \mathbf{0.002 \pm 0.001}  & \mathbf{0.962 \pm 0.002} & \mathbf{1.032} \\
        \midrule
        \midrule
        \makecell{ Thresholds } & \leq9 & \leq27 & \leq45 & \leq0.016 & \geq0.994 & \leq1.0 \\
        \bottomrule
    \end{tabular}
    }
    \caption{Evaluation of PQCs for the hybrid QCNN architecture with 9 qubits. The table details each circuit's complexity, expressibility and entanglement (mean $\pm$ standard deviation), and objective function value ($\mathcal{L}_{PQC}$, mean). Bold numbers indicate the best entries in each column.  The bottom row (Thresholds) shows the constraints and objectives of the ansatz search (AS).}
    \label{tab:mqcnn_pqc_9q}
\end{table}

\clearpage

\begin{figure}[ht]
    \centering
    \begin{quantikz}[rounded corners, column sep=5pt, row sep=4pt]
    & \gate{R_z(\theta_0)} & \ctrl{1} & \gate{U3(\theta_0,\theta_0,\theta_0)} & \qw & \gate[2]{ECR} \gateinput{1} & \gate{U3(\theta_1,\theta_0,\theta_0)} & \qw \\
    & \gate{Y} & \gate{R_z(\theta_0)} & \gate{R_y(\theta_0)} & \gate{R_y(\theta_0)} & \gateinput{0} & \qw & \qw 
    \end{quantikz}
    \caption{2-qubit circuit from the ansatz search of the regular QCNN.}
    \label{fig:rqcnn_pqc_search_circ_2q}
\end{figure}

\begin{table}[ht]
    \centering
    \resizebox{\textwidth}{!}{  
    \begin{tabular}{c C C C C C C}
        \toprule
        Circuit ID & \text{Parameters} & \text{Depth} & \text{Gates} & \text{Expressibility} & \text{Entanglement} & \mathcal{L}_{PQC} \\
        \midrule
        1 & \mathbf{2} & \mathbf{3} & 5 & 0.734 \pm 0.758  &  \mathbf{1.0 \pm 0.0} & 1.055 \\
        2 & 3 & \mathbf{3} & \mathbf{4} & 0.127 \pm 0.137  &  0.252 \pm 0.001 & 1.378 \\
        3 & 4 & \mathbf{3} & 5 & 0.035 \pm 0.033  &  0.371 \pm 0.001 & 1.073 \\
        4 & 4 & \mathbf{3} & 5 & 0.035 \pm 0.033  &  0.251 \pm 0.001 & 1.373 \\
        5 & 4 & 4 & 6 & 0.033 \pm 0.032  &  0.318 \pm 0.006 & 1.206 \\
        6 & 6 & 4 & 6 & \mathbf{0.011 \pm 0.005}  &  0.212 \pm 0.003 & 1.471 \\
        AS & \mathbf{2} & 6 & 8 & 0.096 \pm 0.043  &  0.407 \pm 0.001 & \mathbf{1.006} \\
        \midrule
        \midrule
        \makecell{ Thresholds } & \leq2 & \leq6 & \leq10 & \leq0.021 & \geq0.4 & \leq1.0 \\
        \bottomrule
    \end{tabular}
    }
    \caption{Evaluation of PQCs for the regular QCNN architecture with 2 qubits. The table details each circuit's complexity, expressibility and entanglement (mean $\pm$ standard deviation), and objective function value ($\mathcal{L}_{PQC}$, mean). Bold numbers indicate the best entries in each column.  The bottom row (Thresholds) shows the constraints and objectives of the ansatz search (AS).}
    \label{tab:rqcnn_pqc_2q}
\end{table}


\clearpage

\begin{figure}[ht]
    \centering
    \begin{quantikz}[rounded corners, column sep=5pt, row sep=4pt]
    & \gate{R_x(\theta_0)} & \gate{U3(\theta_0,\theta_0,\theta_0)} & \gate{R_y(\theta_0)} & \ctrl{1} & \qw & \gate{R_x(\theta_1)} & \qw &  \qw &  \qw \\
    & \gate{H} & \qw & \qw & \gate{Y} & \gate[2]{ECR} \gateinput{0}  & \qw & \qw & \gate{U3(\theta_2,\theta_1,\theta_2)} &  \qw \\
    & \gate{R_z(\theta_1)} & \gate{R_z(\theta_1)} & \qw & \qw & \gateinput{1} & \ctrl{-2} & \qw & \qw&  \qw 
    \end{quantikz}
    \caption{
    3-qubit circuit from the ansatz search of the regular QCNN.
    }
    \label{fig:rqcnn_pqc_search_circ_3q}
\end{figure}

\begin{table}[ht]
    \centering
    \resizebox{\textwidth}{!}{  
    \begin{tabular}{c C C C C C C}
        \toprule
        Circuit ID & \text{Parameters} & \text{Depth} & \text{Gates} & \text{Expressibility} & \text{Entanglement} & \mathcal{L}_{PQC} \\
        \midrule
        1 & \mathbf{3} & \mathbf{4} & 8 & 0.601 \pm 0.332  &  \mathbf{1.0 \pm 0.0} & 1.019  \\
        2 & 4 & \mathbf{4} & \mathbf{6} & 0.17 \pm 0.128  &  0.377 \pm 0.0 & 1.439  \\
        3 & 6 & \mathbf{4} & 8 & 0.017 \pm 0.013  &  0.531 \pm 0.006 & 1.203 \\
        4 & 6 & 5 & 9 & 0.046 \pm 0.05  &  0.376 \pm 0.001 & 1.439  \\
        5 & 6 & 8 & 12 & 0.135 \pm 0.142  &  0.627 \pm 0.002 & 1.063  \\
        6 & 12 & 8 & 12 & \mathbf{0.006 \pm 0.001}  &  0.395 \pm 0.002 & 1.407  \\
        AS & \mathbf{3} & 6 & 10 & 0.072 \pm 0.015  &  0.888 \pm 0.0 & \mathbf{1.002}  \\
        \midrule
        \midrule
        \makecell{ Thresholds } & \leq3 & \leq9 & \leq15 & \leq0.02 & \geq0.667 & \leq1.0 \\
        \bottomrule
    \end{tabular}
    }
    \caption{Evaluation of PQCs for the regular QCNN architecture with 3 qubits. The table details each circuit's complexity, expressibility and entanglement (mean $\pm$ standard deviation), and objective function value ($\mathcal{L}_{PQC}$, mean). Bold numbers indicate the best entries in each column.  The bottom row (Thresholds) shows the constraints and objectives of the ansatz search (AS).
    }
    \label{tab:rqcnn_pqc_3q}
\end{table}

\clearpage

\begin{figure}[ht]
    \centering
        \resizebox{\textwidth}{!}{  
    \begin{quantikz}[rounded corners, column sep=5pt, row sep=4pt]
    &  \gate{Z} & & & & & \gate{Y} & \gate{R_z(\theta_2)} & & & & & \\
    &  \gate{R_z(\theta)}  & \gate[2]{ECR} \gateinput{0} & \gate{X} & & \gate{R_y(\theta_0)} & \gate{U(\theta_0,\theta_0,\theta_0)} & \ctrl{-1} & \gate{Y} & \gate{R_y(\theta_2)} & \gate{R_z(\theta_1)} & & \\
    &  & \gateinput{1} & \ctrl{-1} & \gate{U(\theta_0,\theta_0,\theta_0)} & \ctrl{1} & \gate{Z} & \gate{R_z(\theta_1)} & \gate{R_x(\theta_2)} & & \gate{U(\theta_2, \theta_2, \theta_1)} & \gate{\sqrt{X}} & \\
    &  \gate{U(\theta_0,\theta_0,\theta_0)} & & & & \gate{X} & & & & \ctrl{-2} & \gate{R_y(\theta_3)} & & 
    \end{quantikz}
    }
    \caption{
    4-qubit circuit from the ansatz search of the regular QCNN.
    }
    \label{fig:rqcnn_pqc_search_circ_4q}
\end{figure}

    \begin{table}[ht]
        \centering
        \resizebox{\textwidth}{!}{  
        \begin{tabular}{c C C C C C C}
            \toprule
            Circuit ID & \text{Parameters} &  \text{Depth} & \text{Gates} & \text{Expressibility} & \text{Entanglement} & \mathcal{L}_{PQC} \\
            \midrule
            1 & \mathbf{4} & \mathbf{5} & 11 & 0.55 \pm 0.207  &  \mathbf{1.0 \pm 0.0} & 1.008 \\
            2 & 7 & \mathbf{5} & \mathbf{10} & 0.019 \pm 0.014  &  0.367 \pm 0.003 & 1.554 \\
            3 & 8 & \mathbf{5} & 11 & \mathbf{0.007 \pm 0.002}  &  1.621 \pm 0.002 & 1.246 \\
            4 & 8 & 6 & 12 & 0.048 \pm 0.033  &  0.372 \pm 0.002 & 1.549 \\
            5 & 8 & 9 & 16 & 0.01 \pm 0.004  &  0.713 \pm 0.001 & 1.134 \\
            AS & \mathbf{4} & 8 & 19 & 0.039 \pm 0.012  &  0.874 \pm 0.0 & \mathbf{1.0} \\
            \midrule
            \midrule
            \makecell{ Thresholds } & \leq4 & \leq12 & \leq20 & \leq0.019 & \geq0.824 & \leq1.0 \\
            \bottomrule
        \end{tabular}
        }
        \caption{Evaluation of PQC for the regular QCNN architecture with 4 qubits. The table details each circuit's complexity, expressibility and entanglement (mean $\pm$ standard deviation), and objective function value ($\mathcal{L}_{PQC}$, mean). Bold numbers indicate the best entries in each column.  The bottom row (Thresholds) shows the constraints and objectives of the ansatz search (AS).}
        \label{tab:rqcnn_pqc_4q}
    \end{table}

\clearpage

    \begin{figure}[ht]
        \centering
        \begin{tikzpicture}
        \node[rotate=90] {
        \resizebox{1.34\textwidth}{!}{
            \begin{quantikz}[rounded corners, column sep=5pt, row sep=4pt]
             & \gate{X} & & & & \gate{Y} & \gate{R_y(\theta_4)} & & & & & & & &\gate[4]{ECR} \gateinput{0} & \gate{\sqrt{X}} & \ctrl{7} & & \ctrl{5} & & \gate{Y} & & & & 
             \\
            & \gate{U(\theta_1,\theta_2, \theta_3)} & \qw & \gate[8]{ECR} \gateinput{0} & \gate{\sqrt{X}} & & & & & \gate[6]{ECR}\gateinput{0} & & \gate{\sqrt{X}} & & & & & & & & \gate[7]{ECR}\gateinput{0} & & & & \gate{X} &
            \\
            & & & & & & & & & & & & & & & & & \gate[6]{ECR}\gateinput{1} & & & & & & &   
            \\
            & \gate{R_z(\theta_2)} & & & & & & & & & & & \gate{R_x(\theta_3)} & \gate{\sqrt{X}} & \gateinput{1} & & & & & & & &
            \ctrl{3} & \ctrl{-2} &
            \\
            & \gate{R_y(\theta_0)} & \gate{R_x(\theta_1)} & & \ctrl{2} & & \gate[2]{ECR}\gateinput{1} & & \ctrl{3} & & & \gate{R_x(\theta_6)} & & & & & & & & & & & & & 
            \\
            & \gate{H} & & & & & \gateinput{0} & & & & & & & & & & & & \gate{X} & & & \gate{R_z(\theta_6)} & & & 
            \\
            & & & & \gate{Y} & & \gate{H} & \gate{U(\theta_1,\theta_4,\theta_5)} & & \gateinput{1} & & \gate{\sqrt{X}} & \ctrl{-3} & \gate{R_z(\theta_6)} & & & & & & & \ctrl{-6} & & \gate{R_x(\theta_7)} & &
            \\
            & & & & & & & & \gate{R_x(\theta_4)} & & & & & & & & \gate{X} & \gateinput{0} & & \gateinput{1} & & \ctrl{-2} & & & 
            \\
            & & & \gateinput{1} & & \ctrl{-8} & \gate{R_y(\theta_4)} & \gate{R_z(\theta_8)} & & & & & & & & & & & & & & & & &        
            \end{quantikz}
            }};
        \end{tikzpicture}
        
        \caption{9-qubit circuit from the ansatz search of the regular QCNN.}
    \end{figure}

\clearpage

    \begin{table}[ht]
        \centering
        \resizebox{\textwidth}{!}{  
        \begin{tabular}{c C C C C C C}
            \toprule
            Circuit ID & \text{Parameters} &  \text{Depth} & \text{Gates} & \text{Expressibility} & \text{Entanglement} & \mathcal{L}_{PQC} \\
            \midrule
            1 & \mathbf{9} & 10  & 26 & inf  &  \mathbf{1.0 \pm 0.0} & inf  \\
            2 & 16 & \mathbf{8} & \mathbf{24} & 0.005 \pm 0.002  &  0.435 \pm 0.002 & 1.562  \\
            3 & 18 & 10 & 26 & \mathbf{0.001 \pm 0.001}  &  0.792 \pm 0.001 & 1.203  \\
            4 & 18 & 11 & 27 & \mathbf{0.001 \pm 0.001}  &  0.377 \pm 0.001 & 1.621 \\
            5 & 18 & 13 & 36  & \mathbf{0.001 \pm 0.001}  &  0.762 \pm 0.001 & 1.234  \\
            AS & \mathbf{9} & 18 & 45 & 0.047 \pm 0.006  &  \mathbf{1.0 \pm 0.0} & \mathbf{1.0}  \\
            \midrule
            \midrule
            \makecell{ Thresholds } & \leq9 & \leq27 & \leq45 & \leq0.013 & \geq0.994 & \leq1.0  \\
            \bottomrule
        \end{tabular}
        }
        \caption{Evaluation of PQC for the regular QCNN architecture with 9 qubits. The table details each circuit's complexity, expressibility and entanglement (mean $\pm$ standard deviation), and objective function value ($\mathcal{L}_{PQC}$, mean). Bold numbers indicate the best entries in each column.  The bottom row (Thresholds) shows the constraints and objectives of the ansatz search (AS). Circuit 1 has an infinite expressibility value, generating state pairs with fidelities that have a $10^{-30}$ probability of occurring when sampling from the uniform (Haar) measure.}
        \label{tab:rqcnn_pqc_9q}
    \end{table}

\end{document}